\newtheorem{definition}{Definition}
\newtheorem{theorem}{Theorem}
\newtheorem{proof}{Proof}
\begin{document}
\setcounter{page}{1}

\date{Written 2010}

\title{General upper bounds for well-behaving goodness measures on dependency rules}

\author{Wilhelmiina H{\"a}m{\"a}l{\"a}inen\\
Department of Computer Science\\ 
P.O. Box 68, FI-00014 University of Helsinki, Finland\\
whamalai{@}cs.uef.fi} 
\maketitle

\begin{abstract}
In the search for statistical dependency rules, a crucial task is to
restrict the search space by estimating upper bounds for the goodness
of yet undiscovered rules. In this paper, we show that all
well-behaving goodness measures achieve their maximal values in the
same points. Therefore, the same generic search strategy can be
applied with any of these measures. The notion of well-behaving
measures is based on the classical axioms for any proper goodness
measures, and extended to negative dependencies, as well. As an example, 
we show that several commonly used goodness measures are well-behaving. 
\end{abstract}

Keywords: goodness measure, well-behaving, dependency rule, upper bounds

\section{Introduction}

In the rule discovery, a general task is to search rules of form
$X\rightarrow A$, where $X$ is a set of true-valued binary attributes,
$A$ is a binary attribute, and $X$ and $A$ are statistically
dependent. In practice, the problem occurs in two forms: one may
either want to enumerate all sufficiently good rules or to search the
best $K$ rules. In both cases, the goodness of a rule $X \rightarrow
A$, is estimated by some goodness measure $M$. In the enumeration
problem, one should find all rules $X\rightarrow A$, for which
$M(X\rightarrow A)\geq min_M$, for some minimum threshold $min_M$. If
small $M$ values indicate a good rule, then a threshold $max_M$ is used
instead. In the optimization problem, one should find $K$ rules which
have maximal (or minimal) $M$ values among all possible rules.

In both search tasks a crucial problem is to restrict the search space
by estimating a tight upper bound (or a lower bound) for $M(XQ
\rightarrow A)$ for any rule $XQ \rightarrow A$, when a more general
rule $X\rightarrow A$ is known. Based on this upper bound $ub$, one
can prune the rule $XQ \rightarrow A$ without further checking, if
$ub$ is too small.

In this research note, we prove general upper/lower bounds, which hold
for any {\em well-behaving} measure $M$. The notion of well-behaving
is defined by the classical axioms introduced in
\cite{piatetskyshapiro} and \cite{major}. In practice, the axioms hold
for a large class of popular goodness measures used for evaluating the
goodness of statistical dependencies, classification rules, or
association rules.

\begin{table}
\caption{Basic notations.}
\label{notations}
\begin{center}
\begin{tabular}{ll}
$A$&single attribute\\
$X,Q,R$& attribute sets\\
$m(X)$&absolute frequency of $X$\\
$P(X)$&relative frequency of $X$\\
$cf$, $P(A|X)$&confidence  $P(A|X)=\frac{P(XA)}{P(X)}$\\
$\delta$&leverage, $P(XA)-P(X)P(A)$\\
$N_X,N_{XA},N_A,N\in \mathbb{N}$&random variables\\
\end{tabular}
\end{center}
\end{table}

For any rule $X\rightarrow A$, the measure value $M(X\rightarrow A)$
can be determined as a function of four variables: absolute
frequencies $N_X=m(X)$, $N_{XA}=m(XA)$, $N_A=m(A)$, and data size
$N=n$ (see basic notations in Table \ref{notations}). Let us now
assume that the measure $M$ is increasing by goodness, meaning that
high values of $M(X\rightarrow A)$ indicate that $X\rightarrow A$ is a
good rule. According to classical axioms by Piatetsy-Shapiro
\cite{piatetskyshapiro} the following axioms should hold for any
proper measure $M$, measuring the goodness of a positive dependency
between $X$ and $A$:

\begin{itemize}
\item[(i)] $M$ minimal, when $NN_{XA}=N_XN_A$,  
\item[(ii)] $M$ is monotonically increasing with $N_{XA}$, when $N_X$,  
$N_A$, and $N$ remain unchanged, and 
\item[(iii)] $M$ is monotonically decreasing with $N_X$ (or $N_A$), when 
$N_{XA}$, $N_A$ (or $N_X$), and $N$ remain unchanged. 
\end{itemize}

The first axiom simply states that $M(X\rightarrow A)$ gets its
minimum value, when $X$ and $A$ are independent. In addition, it is
(implicitly) assumed that $M$ gets the minimum value also for
negative dependencies, i.e.\ when $NN_{XA}<N_XN_A$. The second axiom
states that $M$ increases, when the dependency becomes stronger 
(leverage $\delta=P(XA)-P(X)P(A)$ increases) and the rule becomes
more frequent. The third axiom states that $M$ decreases, when the
dependency becomes weaker ($\delta$ decreases). 

In \cite{piatetskyshapiro} it was noticed that under these conditions
$M$ gets its maximal value for any fixed $m(X)$, when $m(XA)=m(X)$. In
addition, it was assumed that $M$ would get its global maximum
(supremum), when $m(XA)=m(X)=m(A)$. However, the latter does not
necessarily hold, because the axioms do not tell how to compare rules
$X\rightarrow A$ and $XQ\rightarrow A$, when $P(A|X)=P(A|XQ)=1$. In
this case, the more general rule, $X\rightarrow A$, has both larger
$N_{XA}$ and $N_X$ than $XQ\rightarrow A$ has. Major and Mangano 
\cite{major} suggested a fourth axiom, which solves this
problem:

\begin{itemize}
\item[(iv)] $M$ is monotonically increasing with $N_X$, when
  $cf=\frac{N_{XA}}{N_X}$ is fixed, $N_A$ and $N$ are fixed, and
  $cf>\frac{N_A}{N}$.
\end{itemize}

According to this axiom, a more general rule is better, when two rules
have the same (or equally frequent) consequent and the confidence is
the same. In addition, it was required that the dependency should be
positive. However, based on our derivations, we assume that the same
property holds also for negative dependencies, and $M$ is
non-increasing only when there is independence. In Appendix
\ref{appA}, we show this for the $\chi^2$-measure and mutual
information.

In the following, we extend the axioms for negative dependencies and
prove general upper bounds which hold for any measure $M$ following
these axioms. The upper bounds are the same as derived in
\cite{morishitasese} in the case of the $\chi^2$-measure. In
\cite{morishitasese}, the upper bounds were derived by showing that
the $\chi^2$ is a convex function of $N_X$ and $N_{XA}$ for a fixed
consequent $A$. Similar results could be achieved for other convex
measures, but checking the axioms is simpler than convexity proofs. In
addition, there are non-convex goodness measures, which still follow
the axioms (an example of a non-convex and non-concave well-behaving
measure is the $z$-score $z_1$, which we consider in Appendix
\ref{appA}).

\section{Goodness measures for dependency rules}

Let us first define a general goodness measure for dependency
rules. For simplicity, we consider rules of form $X\rightarrow A=a$, 
where $a\in\{0,1\}$, i.e. rules $X\rightarrow A$ and $X \rightarrow \neg A$. 

\begin{definition}[Goodness measure]
Let $R$ be a set of binary attributes and ${\mathcal{U}}=\{X \rightarrow
A=a~|~X\subsetneq R, A\in R\setminus X, a\in\{0,1\}\}$ the set of all
possible rules, which can be constructed from attributes $R$. 

Let $f(N_X,N_{XA},N_A,N): \mathbb{N}^4 \rightarrow \mathbb{R}$ be some
statistical measure function, which measures the significance of
positive dependency between $X$ and $A=a$, given absolute frequencies 
$N_X=m(X)$, $N_{XA}=m(XA=a)$, $N_A=m(A=a)$, and data size $N=n$.

Function $M:{\mathcal{U}}\rightarrow \mathbb{R}$ is a goodness measure
for dependency rules, if $M(X\rightarrow A=a)=f(m(X),m(XA=a),m(A=a),n)$.
 
Measure $M$ is called increasing (by goodness), if large values of
$M(X\rightarrow A=a)$ indicate that $X\rightarrow A=a$ is a good rule,
and, respectively, decreasing, if low values indicate a good rule.
\end{definition}

In the above definition, we have defined the statistical measure
function on parameters $N_X$, $N_{XA}$, $N_A$, and $N$. First, we note
that in practice, some of these parameters can be considered
constant. For example, if the data size $N=n$ is given, it can be
omitted. If the consequent $A=a$ is also fixed, we can use a simpler
function $f_{A=a}(N_X,N_{XA})$. 

Second, we note that even if we have defined the function $f$ in
whole $\mathbb{N}^4$, only some parameter value combinations can occur
in any real data set. For any real frequencies hold $0\leq n$, $0\leq
m(X)\leq n$, $0\leq m(A=a)\leq n$, and $0\leq m(XA=a)\leq
\min\{m(X),m(A=a)\}$. In addition, for any non-trivial rule must hold
$0<n$, $0<m(X)<n$, and $0<m(A=a)<n$.  If $n=0$, the data set would not
exist. If $m(A=a)$ or $m(X)$ were 0, the corresponding rule
$X\rightarrow A=a$ would not occur in the data at all. On the other
hand, if $m(X)=n$ or $m(A=a)=n$, then either $X$ or $A=a$ would occur on
all rows of data, and the rule could express only independence.
Therefore, it suffices that the function $f$ is defined in the set of all
legal parameter values.

Third, we note that the actual function can be defined on other
parameters, if they can be derived from $N_X$, $N_{XA}$, $N_A$, and
$N$. Examples of commonly occurring derived parameters are leverage
 $\delta$ and confidence $cf$ (Table \ref{notations}). For example, when the
data size $n$ and the consequent $A$ are fixed, the $\chi^2$-measure
can be defined e.g.\ by the following two functions:
$$f_1(N_X,N_{XA})=\frac{n(N_{XA}-N_XP(A))^2}{N_X(n-N_X)P(A)(1-P(A))}$$ 
$$f_2(N_X,\delta)=\frac{n^3\delta^2}{N_X(n-N_X)P(A)(1-P(A))}.$$
Functions $f_1$ and $f_2$ can be transformed to each other by equalities
$$f_1(N_X,N_{XA})=f_2(N_{X},\frac{N_{XA}-N_XP(A)}{N})$$
$$f_2(N_X,\delta)=f_1(N_X,N_XP(A)+n\delta).$$
These transformations are often useful, when the behaviour of the
function analyzed.

Next, we will define a general class of measure functions, for which
upper (or lower) bounds can be easily provided. The criteria for such
well-behaving goodness measures are based on the classical axioms. For
simplicity, we assume that $M$ is increasing by goodness; for a decreasing
measure, the properties are reversed (minimum vs. maximum, increasing
vs. decreasing).

\begin{definition}[Well-behaving goodness measure]
Let $M$ be an increasing goodness measure defined by
function $f(N_X,N_{XA},N_A,N)$. Let $f_2(N_X,\delta,N_A,N)$ 
be another function which defines the same measure:\\
$f_2(N_X,\delta,N_A,N)=f(N_X,\frac{N_XN_A}{N}+\delta N,N_A,N)$ and
$f(N_X,N_{XA},N_A,N)=f_2(N_X,\frac{NN_{XA}-N_XN_A}{N^2},N_A,N)$. 

Let $S \subseteq \mathbb{N}^4$ be a set of all legal parameter 
values $(N_X,N_{XA},N_A,N)$ for an arbitrary data set. 

Measure $M$ is called well-behaving, if it has the following 
properties in set $S$: 
\begin{itemize}
\item[(i)] $f_2$ gets its minimum value, when $\delta=0$.
\item[(ii)] If $N_X$, $N_A$, and $N$ are 
fixed, then 
\begin{itemize}
\item[(a)] $f_2$ is a monotonically increasing function of 
$\delta$, when $\delta> 0$ (positive dependence), and 
\item[(b)] $f_2$ is a monotonically  
decreasing function of $\delta$, when $\delta<0$ (negative 
dependence).
\end{itemize}
\item[(iii)] If $N_{XA}=m(XA=a)$, $N_A=m(A=a)$, and $N=n$ are fixed, then 
\begin{itemize}
\item[(a)] $f$ is a monotonically decreasing function of 
$N_X$, when $N_X< \frac{n\cdot m(XA=a)}{m(A=a)}$ (positive dependence), and 
\item[(b)] $f$ is a monotonically increasing function of $N_X$, when 
$N_X> \frac{n\cdot m(XA=a)}{m(A=a)}$ (negative dependence).
\end{itemize}
\item[(iv)] If $N_A=m(A=a)$ and $N=n$ are fixed, then for all 
$cf_1,cf_2\in[0,1]$ 
\begin{itemize}
\item[(a)] $f(N_X,cf_1N_X,m(A=a),n)$ is monotonically increasing with $N_x$, 
when $cf_1>\frac{m(A=a)}{n}$ (positive dependence), and
\item[(b)]  $f(N_X,m(A=a)-cf_2(n-N_X),m(A=a),n)$ is monotonically 
decreasing with $N_X$, when $cf_2>\frac{m(A=a)}{n}$ (negative dependence).
\end{itemize}
\end{itemize}
\end{definition}

The first two conditions are obviously equivalent to the classical
axioms (i) and (ii). The only difference is that the behaviour is
expressed in terms of leverage $\delta$. This enables that the measure
can be defined for negative dependencies, as well. In addition, it is
often easier to check that the conditions hold for a desired function,
when it is expressed as a function of $\delta$. In practice, this can
be done by differentiating $f_2(N_X,\delta,N_A,N)$ with respect to
$\delta$, where $N_X$, $N_A$, and $N$ are considered constants. If $M$
is an increasing function, then the derivative
$f_2'=f_2'(N_X,\delta,N_A,N)$ should be $f_2'=0$, when $\delta=0$,
$f_2'>0$, when $\delta>0$, and $f_2'<0$, when $\delta<0$. For
decreasing $M$, the signs of the derivative are reversed. We note that
it is enough that the function is defined and differentiable in the set of
legal values.

The third condition is also equivalent to the classical axiom (iii),
when extended to both positive and negative dependencies. Before
point $N_X=\frac{n\cdot m(XA=a)}{m(A=a)}$, $M$ measures positive
dependence, and, after it, negative dependence. The third condition can
be checked by differentiating $f(N_X,m(XA=a),m(A=a),n)$ with respect to $N_X$,
where $m(XA=a)$, $m(A=a)$, and $n$ are constants. For an increasing $M$,
the derivative $f'=f'(N_X,m(XA=a),m(A=a),n)$ should be $f'=0$, when
$M_X=\frac{n\cdot m(XA=a)}{m(A=a)}$ (independence), $f'<0$, when
$N_X<\frac{n\cdot m(XA=a)}{m(A=a)}$ (positive dependence), and $f'>0$, 
when $N_X>\frac{n\cdot m(XA=a)}{m(A=a)}$ (negative dependence). For 
decreasing $M$, the signs are reversed.

Similarly, the fourth condition is equivalent to the classical axiom
(iv), when extended to both positive and negative dependencies. In the
case of positive dependence, $cf_1$ corresponds to confidence
$P(A=a|X)$, which is kept fixed. In the case of negative dependence,
$cf_2$ corresponds to confidence $P(A=a|\neg X)$, which is kept
fixed. (Now the equation is $N_{XA}=m(A=a)-cf_2(n-N_X) \Leftrightarrow
m(A=a)-N_{XA}=cf_2(n-N_X) \Leftrightarrow N_{\neg XA}=cf_2N_{\neg
  X}$.) We require that condition (a) holds for positive dependence
($P(A=a|X)>P(A)$) and (b) for negative dependence ($P(A=a|\neg
X)>P(A=a) \Leftrightarrow P(A=a|X)<P(A=a)$). However, according to our
analysis of the $\chi^2$-measure and mutual information (Appendix \ref{appA}),
both (a) and (b) hold everywhere, where $P(A=a|X)\neq P(A=a)$ and
$P(A=a|\neg X)\neq P(A=a)$. It is still unproved, whether this holds
for any well-behaving measure, but for the upper bound proofs the
above conditions are sufficient.

In practice, the fourth condition can be checked by differentiating
functions $f(N_X,cf_1N_X,m(A=a),n)$ and
$f(N_X,m(A=a)-cf_2(n-N_X),m(A=a),n)$ with respect to $N_x$. For
increasing $M$, the derivative $f'(N_X,cf_1N_x,m(A=a),n)$ should be
$f'>0$, when $cf_1>\frac{m(A=a)}{n}$, and the derivative
$f'(n-N_X,cf_2(n-N_X),m(A=a),n)$ should be $f'<0$, when
$cf_2>\frac{m(A=a)}{n}$. For decreasing $M$, the signs are reversed.

As an example, we show in Appendix \ref{appA} that the
$\chi^2$-measure, mutual information, two versions of the  
the $z$-score, and the $J$-measure are well-behaving. 

\section{Possible frequency values}

Before we go to the theoretical results, we introduce a graphical
representation, which simplifies the proofs.

Let us now consider the set of all legal frequency values, when the
data size $n$ and consequent $A=a$ (corresponding frequency $m(A=a)$)
are fixed.  All legal values of $N_X$ and $N_{XA}$ can be represented
in a two-dimensional space span by variables $N_X$ and
$N_{XA}$. Figure \ref{depproof1} shows a graphical representation of
the space.

\begin{figure}[!h]
\begin{center}
\includegraphics[width=\textwidth]{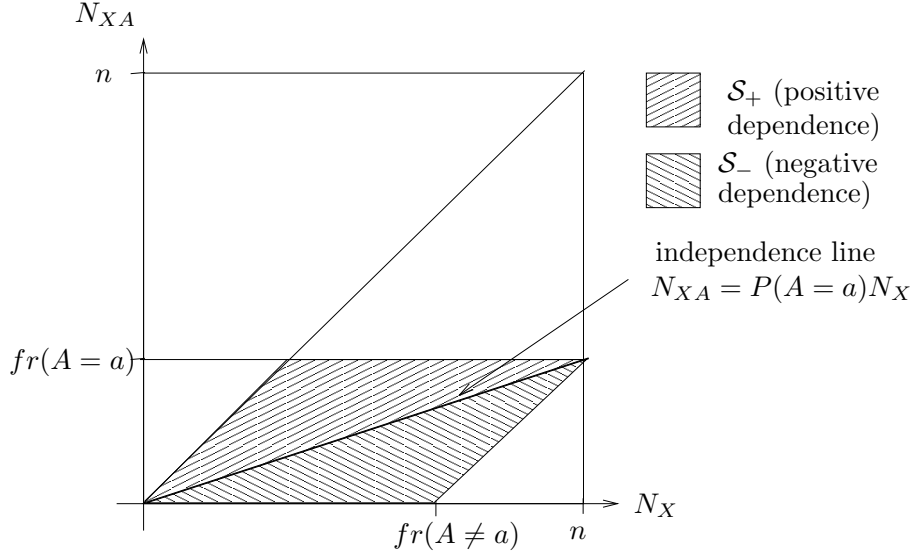}
\caption{Two-dimensional space of absolute frequencies $N_X$ and $N_{XA}$, 
when $A=a$ is fixed. In a given data set of size $n$, all points 
$(m(X),m(XA=a))$ lie in patterned areas.}
\label{depproof1}
\end{center}
\end{figure}

In any data set of size $N=n$, all possible frequency combinations
$N_X,N_{XA}$, $N_{XA}\leq N_X$, must lie in the triangle
$\{(0,0),(n,n),(n,0)\}$. If also the consequent $A=a$ is fixed, with
absolute frequency $m(A=a)$, the area of possible combinations is
restricted to the patterned areas in Figure \ref{depproof1} . Boundary
line $[(0,m(A=a)),(n,m(A=a))]$ follows from the fact that $m(XA=a)\leq
m(A=a)$ and line $[(m(A\neq a),0),(n,m(A))]$ from the fact that
$m(A\neq a)\geq m(XA\neq a)\Leftrightarrow m(XA=a)\geq m(X)-m(A\neq
a)$.  Line $[(0,0),(n,m(A=a))]$ is called the {\em independence line},
because on that line $N_{XA}=P(A=a)N_X$, i.e.\ $m(XA=a)=P(A=a)m(X)$,
and the corresponding $X$ and $A=a$ are statistically independent. If
the point lies above the independence line, the dependency is positive
($m(XA=a)>P(A=a)m(X)$), and below the line, it is negative
($m(XA=a)<P(A=a)m(X)$).

\begin{figure}[!h]
\begin{center}
\includegraphics[width=0.8\textwidth]{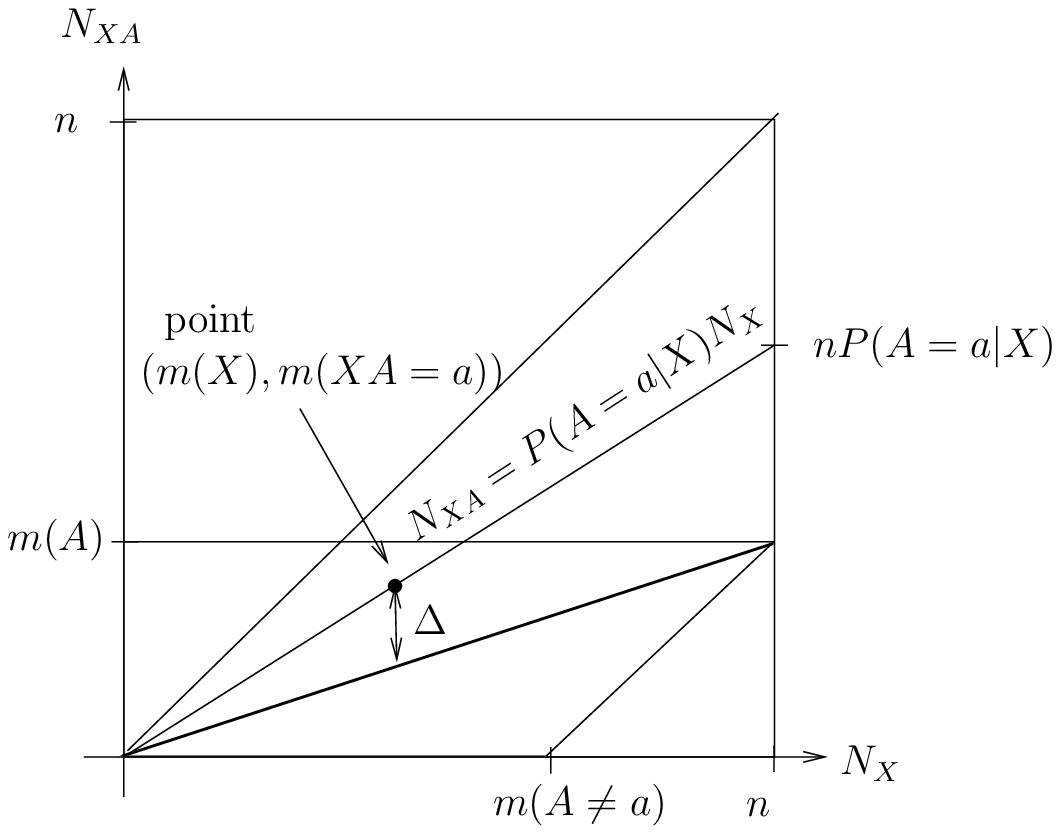}
\caption{Point $(m(X),m(XA=a))$ corresponding to rule $X \rightarrow A=a$. 
The vertical difference $\Delta$ from the independence line measures 
absolute leverage, $\Delta=n\delta$.}
\label{depproof2}
\end{center}
\end{figure}

Figure \ref{depproof2} shows a point $(m(X),m(XA=a))$ corresponding to
rule $X \rightarrow A=a$. In this case, the dependency is positive,
because $(m(X),m(XA=a))$ lies above the independence line. The slope
of line $[(0,0),(n,nP(A=a|X))]$ is the rule confidence
$P(A=a|X)=\frac{m(XA=a)}{m(X)}$. The vertical difference between point
$(m(X),m(XA=a))$ and the independence line, marked as $\Delta$,
defines the absolute leverage $\Delta=n\delta$. If the dependency is
negative and the point lies below the independence line, the leverage
is negative.

\begin{figure}[!h]
\begin{center}
\includegraphics[width=\textwidth]{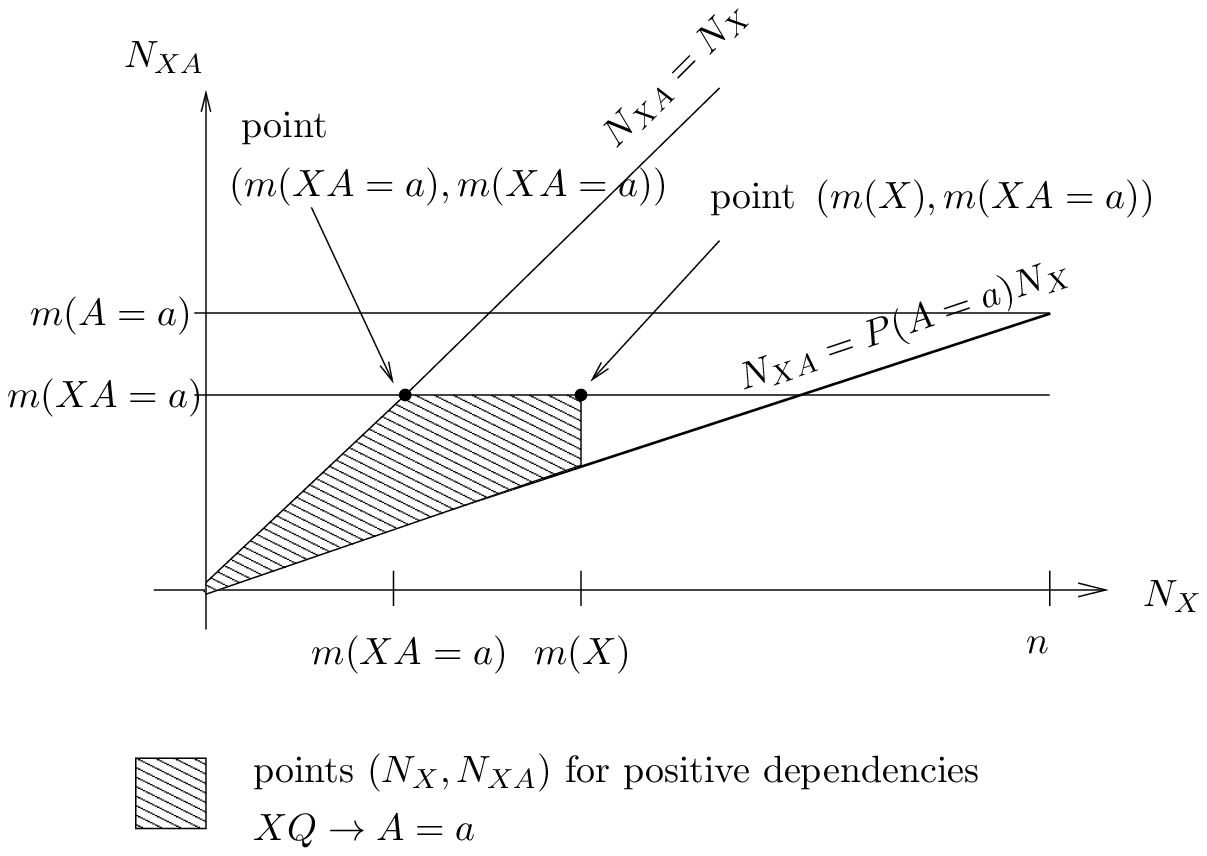}
\caption{When a point $(m(X),m(XA=a)$, corresponding to rule 
$X \rightarrow A=a$, is given, the points associated to more specific 
positive dependency rules $XQ \rightarrow A=a$ lie in the patterned area.}
\label{depproof3}
\end{center}
\end{figure}

Figure \ref{depproof3} shows how the knowledge on a rule $X
\rightarrow A=a$ can be utilized to determine the possible frequency
values of more specific positive dependency rules $XQ \rightarrow
A=a$. Since $m(XQA=a)\leq m(XA=a)$, all points $(m(XQ),m(XQA=a))$ must
lie under the line $[(0,m(XA=a)),(n,m(XA=a))]$. Because the
dependencies are positive, they also have to lie above the
independence line. In the next section, we will show that for a
well-behaving goodness measure $M$, point $(m(XA=a),m(XA=a))$
defines an upper bound (or lower bound) for any positive dependency
rule $XQ \rightarrow A=a$.

\begin{figure}[!h]
\begin{center}
\includegraphics[width=\textwidth]{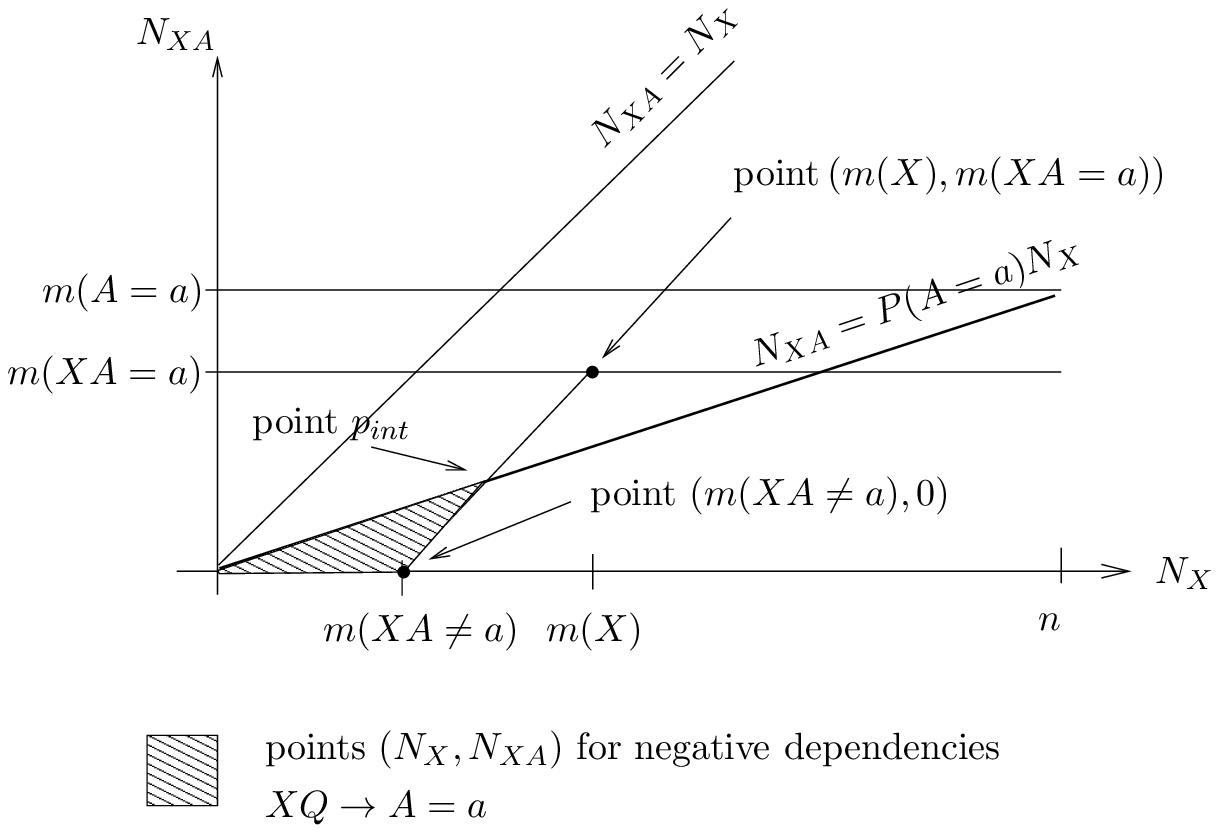}
\caption{When a point $(m(X),m(XA=a))$, corresponding to rule 
$X \rightarrow A=a$, is given, the points associated to more specific 
negative dependency rules $XQ \rightarrow A=a$ lie in the patterned area.}
\label{depproof4}
\end{center}
\end{figure}

Figure \ref{depproof4} shows the area, where all possible points for
negative dependency rules $XQ \rightarrow A=a$ must lie. Once again,
$m(XQA=a)\leq m(XA=a)$, and because the dependence is negative, the
points must lie under the independence line. In addition, the points
are restricted by line $[(m(X),m(XA=a)),(m(XA\neq a,0))]$. The reason
is that $m(XQA=a)=m(XA=a)-m(X\neg QA=a)$, where $m(X\neg QA=a)\in
[0,m(XA=a)]$.  On the other hand, $m(XQ)=m(X)-m(X\neg QA=a)-m(X\neg Q
A\neq a)\leq m(X)-m(X\neg Q A=a)$. So, the line contains the maximal
possible values $N_X=m(XQ)$ and the corresponding $N_{XA}=m(XQA=a)$
for any $m(X\neg QA=a)\in [0,m(XA=a)]$. In point $(m(X),m(XA=a))$,
$m(X\neg QA=a)=0$, and in point $(m(XA\neq a),0)$, $m(X\neg Q
A=a)=m(XA=a)$. In the following, we will show that point $(m(XA\neq
a),0)$ defines an upper bound for the $M$ value of any negative
dependency rule $XQ \rightarrow A=a$.

Figures \ref{depproof6} and \ref{depproof7} show the four axioms
graphically. According to axioms (ii) and (iii), function $f$
increases, when it departures from the independence line either
horizontally or vertically (Figure \ref{depproof6}). According to
axiom (iv), $f$ increases on lines $N_{XA}=cf_1N_X$, when $cf_1>P(A=a)$,
and decreases on lines $m(A=a)-N_{XA}=cf_2(n-N_X)$, when $cf_2>P(A=a)$
(Figure \ref{depproof7}).

\begin{figure}[!h]
\begin{center}
\includegraphics[width=\textwidth]{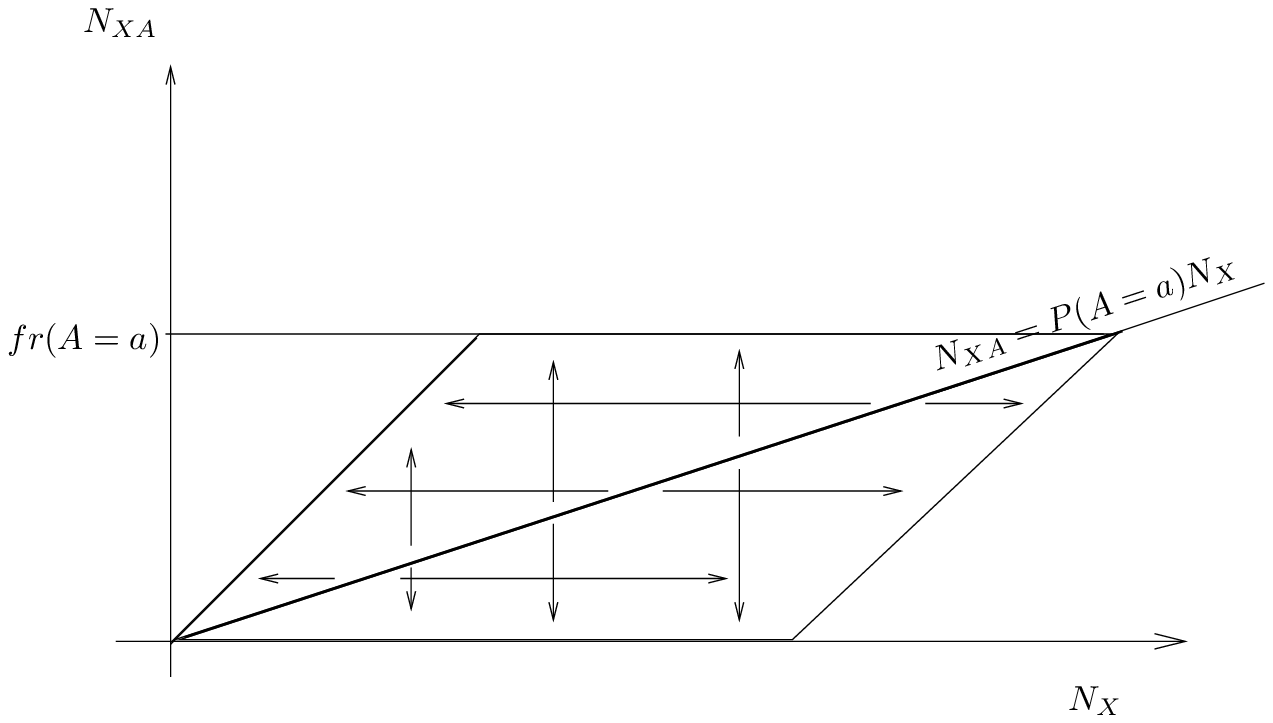}
\caption{Arrows show directions where $f$ increases (Axioms (ii) and (iii)).}
\label{depproof6}
\end{center}
\end{figure}

\begin{figure}[!h]
\begin{center}
\includegraphics[width=\textwidth]{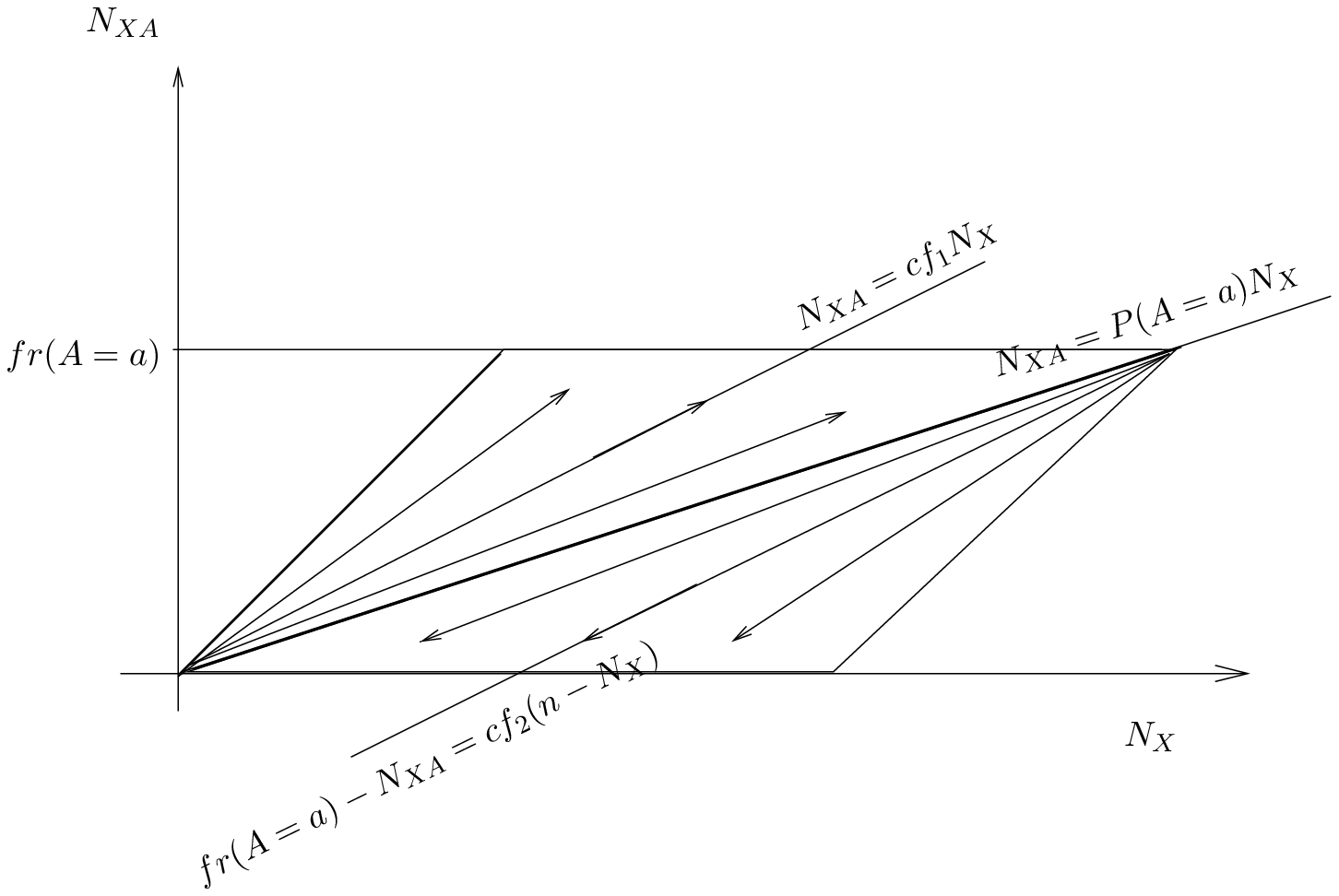}
\caption{Arrows show directions where $f$ increases (Axiom (iv)).}
\label{depproof7}
\end{center}
\end{figure}

\section{Useful upper bounds}

First we will note a couple of trivial properties, which follow from
the definition of well-behaving measures.

\begin{theorem}
\label{simple}
Let $M$ be a well-behaving, increasing measure. Let $S$ be a set of
legal values, as before. When $N=n$ and $N_A=m(A=a)$ are fixed, $M$ is
defined by function $f(N_X,N_{XA},m(A=a),n)$.

For positive dependencies hold   
\begin{itemize}
\item[(i)] $f$ gets its maximal values in set $S$  
on the border defined by points $(0,0)$, $(m(A=a),m(A=a))$, $(n,m(A=a))$.
\item[(ii)] $f$ gets its supremum (globally maximal value) in set $S$ in point 
$(m(A=a),m(A=a))$. 
\end{itemize}

and for negative dependencies hold 

\begin{itemize}
\item[(i)] $f$ gets its maximal values in set $S$  
on the border defined by points $(0,0)$, $(m(A\neq a),0)$, $(n,m(A=a))$.
\item[(ii)] $f$ gets its supremum in set $S$ in point 
$(m(A\neq a),0)$. 
\end{itemize}
\end{theorem}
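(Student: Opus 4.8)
The plan is to work in the two-dimensional $(N_X, N_{XA})$ space described in the figures, and to show that the claimed boundary points dominate all interior points by repeatedly applying the monotonicity axioms (ii), (iii), and (iv). Let me think about what the feasible region looks like and how each axiom lets me "push" an arbitrary point toward the claimed maximum.

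Let me set up the geometry. With $N = n$ and $N_A = m(A=a)$ fixed, the feasible set $S$ is a subset of the $(N_X, N_{XA})$-plane. The constraints are $0 \le N_{XA} \le N_X \le n$, $N_{XA} \le m(A=a)$, and $N_{XA} \ge N_X - m(A\neq a) = N_X - (n - m(A=a))$. The independence line is $N_{XA} = P(A=a) N_X = \frac{m(A=a)}{n} N_X$, running from $(0,0)$ to $(n, m(A=a))$. Positive dependency is the region above this line; negative dependency is below.

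For the positive case: the feasible positive-dependency region is bounded above by the horizontal line $N_{XA} = m(A=a)$ (since $N_{XA} \le m(A=a)$), above-left by the diagonal $N_{XA} = N_X$ (since $N_{XA} \le N_X$), and below by the independence line. So the positive region is a triangle with vertices $(0,0)$, $(m(A=a), m(A=a))$, and $(n, m(A=a))$. The claim (i) is that $f$ achieves its max on the boundary formed by these three vertices, and (ii) that the supremum is exactly at $(m(A=a), m(A=a))$.

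Here is the strategy for the positive case. Given any interior point $(N_X, N_{XA})$ with positive dependency:

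First, I would use axiom (ii) to move vertically upward. Fixing $N_X$, $N_A$, $N$, axiom (ii) says $f$ increases with $\delta$ for $\delta > 0$, i.e., increases with $N_{XA}$ while staying in the positive region. So I can increase $N_{XA}$ until I hit a boundary — either the top line $N_{XA} = m(A=a)$ or the diagonal $N_{XA} = N_X$ — without decreasing $f$. This shows $f$ is maximized on the upper boundary of the positive region, which is exactly the broken line through the three claimed vertices. That proves (i).

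For (ii), I need to further argue that among all upper-boundary points, the vertex $(m(A=a), m(A=a))$ is supremal. The upper boundary splits into two segments: the diagonal segment from $(0,0)$ to $(m(A=a), m(A=a))$ where $N_{XA} = N_X$ (confidence $cf = 1$), and the horizontal segment from $(m(A=a), m(A=a))$ to $(n, m(A=a))$ where $N_{XA} = m(A=a)$ is fixed. On the horizontal segment, I would apply axiom (iii): with $N_{XA} = m(A=a)$, $N_A$, $N$ fixed, $f$ is decreasing in $N_X$ for $N_X < \frac{n \cdot m(XA=a)}{m(A=a)} = \frac{n \cdot m(A=a)}{m(A=a)} = n$, so $f$ decreases as $N_X$ increases from $m(A=a)$ to $n$. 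Hence on this segment $f$ is largest at the left endpoint $(m(A=a), m(A=a))$. On the diagonal segment, confidence is fixed at $cf_1 = 1 > P(A=a)$, so axiom (iv)(a) says $f$ increases with $N_X$, hence $f$ is largest at the right endpoint $(m(A=a), m(A=a))$. Both segments thus point to the single vertex $(m(A=a), m(A=a))$ as the supremum, proving (ii).

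The negative case is symmetric. The feasible negative-dependency region is bounded below by the lower-constraint line $N_{XA} = N_X - m(A\neq a)$ (through $(m(A\neq a), 0)$ and $(n, m(A=a))$), on the left by the axis $N_{XA} = 0$, and above by the independence line. I would use axiom (ii)(b) to move vertically downward (decreasing $N_{XA}$ increases $f$ in the negative region), pushing any interior point to the lower boundary — the broken line through $(0,0)$, $(m(A\neq a), 0)$, $(n, m(A=a))$ — giving negative-(i). Then on the segment $N_{XA}=0$ from $(0,0)$ to $(m(A\neq a),0)$, axiom (iii)(b) gives $f$ increasing in $N_X$ (since here $N_X > \frac{n\cdot 0}{m(A=a)} = 0$), so the max is at $(m(A\neq a),0)$; and on the segment from $(m(A\neq a), 0)$ to $(n, m(A=a))$, confidence $P(A=a|\neg X) = cf_2 = 1 > P(A=a)$ is fixed, so axiom (iv)(b) gives $f$ decreasing in $N_X$, again pointing to $(m(A\neq a), 0)$. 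This proves negative-(ii).

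The main obstacle I anticipate is bookkeeping at the corners and boundaries of $S$, together with the fact that the axioms are stated as monotonicity rather than continuity — so "maximum" must be read as "supremum" exactly as the theorem does, since integer-frequency feasible points need not include the limiting vertex for every general rule. I would be careful to state that the monotonicity steps establish the direction of increase, and that $(m(A=a), m(A=a))$ (resp. $(m(A\neq a), 0)$) is the supremal corner even when it is not itself attained by a given more-specific rule. A secondary subtlety is verifying that each intermediate "push" stays inside the feasible, correctly-signed dependency region so that the relevant axiom actually applies; the geometry in Figures \ref{depproof6} and \ref{depproof7} makes this transparent, since the increase-directions of the axioms all point consistently toward the claimed vertices.
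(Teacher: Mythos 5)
Your proposal is correct and follows essentially the same route as the paper's own proof: axiom (ii) pushes any point to the outer border (part (i)), and then axioms (iii) and (iv) applied to the two boundary segments (horizontal/axis segment and the fixed-confidence segment with $cf=1$) identify $(m(A=a),m(A=a))$, respectively $(m(A\neq a),0)$, as the supremal corner. You merely spell out explicitly the negative-dependency case that the paper dismisses as ``similar,'' and your justifications there (axiom (iii)(b) on the segment $N_{XA}=0$ and axiom (iv)(b) with $cf_2=1$ on the lower constraint line) are exactly the intended ones.
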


\begin{proof}
Let us first consider positive dependencies.

(i) Since $M$ is well-behaving, $f$ is an increasing function of
$\delta$, when $\delta\geq 0$, in any point $N_X$. Therefore, it gets
its maximum value on the mentioned border. (ii) When $N_{XA}=m(XA=a)$
is fixed, $M$ being well-behaving, $f$ is a decreasing function of
$N_X$, when $N_X\leq \frac{n\cdot m(XA=a)}{m(A=a)}$. Therefore, $f$ is
decreasing on line $[(m(A=a),m(A=a)),(n,m(A=a))]$. On the other hand, we
know that for well-behaving $M$, $f(N_X,cfN_X,m(A=a),n)$ is increasing
with $N_X$, when $cf>\frac{m(A=a)}{n}$. When $cf=1$,
$f(N_X,cfN_X,m(A=a),n)$ coincides line
$[(0,0),(m(A=a),m(A=a))]$. Therefore, $f$ gets its maximum value, when
$N_X=N_{XA}=m(A=a)$.

For negative dependencies, the proof is similar. The only notable
exception is that now $f$ is increasing on line $[(0,0),(m(A\neq
  a),0)]$ and decreasing on line $[(m(A\neq a),0),(n,m(A=a))]$.
\end{proof}

This result can already be used for pruning in two ways. In the
beginning, some of the possible consequents $A=a$ may be pruned
out. Given a minimum threshold $min_M$, $A=a$ cannot occur in the
consequent of any sufficiently good positive dependency rule, if
$f(m(A=a),m(A=a),m(A=a),n)<min_M$. Similarly, $A=a$ cannot occur in
the consequent of any sufficiently good negative dependency rule, if
$f(m(A\neq a),0,m(A=a),n)<min_M$. We note that attribute $A$ can still
occur in the antecedent of good rules. This pruning property is
effective only with measures (like the mutual information), whose
supremum depends on $m(A=a)$. For example, with the $\chi^2$ measure,
the supremum is the same for all $m(A=a)$, and no pruning is possible,
when nothing else is known.

The second case occurs, when only $m(X)$ is known, but $m(XA=a)$ is
unknown. Now we can estimate an upper bound for both $X\rightarrow
A=a$ and all its specializations $XQ\rightarrow A=a$, by substituting
the best possible value for $N_{XA}$. In the case of positive
dependencies, the best possible value for $N_{XA}$ is
$\min\{m(X),m(A=a)\}$, and in the case of negative dependencies, it is
$\max\{0,m(X)-m(A\neq a)\}$. In practice, this means that when
$m(X)<m(A)$, the best possible $M$ value for positive dependence (in
point $(m(A=a),m(A=a))$ cannot be achieved any more, and the effective
pruning can begin. In the case of negative dependencies, the same
happens, when $m(X)$ becomes $m(X)<m(A\neq a)$, and point $(m(A\neq
a),0)$ is no more reachable.

The next theorem gives an upper bound for any positive or negative
dependency rule, when a more general rule is already known.

\begin{theorem}
Let $n$ and $m(A=a)$ be fixed and $M$, $S$ and $f$ like before.  Given
$m(X)$ and $m(XA=a)$ and an arbitrary attribute set $Q\subseteq
R\setminus (X\setminus\cup\{A\})$
\begin{itemize}
\item[(a)] for positive dependency $XQ \rightarrow A=a$ holds 
$f(m(XQ),m(XQA=a),m(A=a),n)\leq f(m(XA=a),m(XA=a),m(A=a),n)$, and 
\item[(b)] for negative dependency $XQ \rightarrow A=a$ holds 
$f(m(XQ),m(XQA=a),m(A=a),n)\leq f(m(XA\neq a),0,m(A=a),n)$.  
\end{itemize}
\end{theorem}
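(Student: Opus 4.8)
The plan is to argue entirely in the two-dimensional $(N_X,N_{XA})$-plane with $n$ and $m(A=a)$ held fixed, i.e.\ in the pictures of Figures \ref{depproof3} and \ref{depproof4}. For the fixed specialization $XQ\rightarrow A=a$ write $P=(m(XQ),m(XQA=a))$ for its frequency point. In each case I would exhibit a path from $P$ to the claimed extremal point consisting of two straight segments, and check that $f$ is non-decreasing along each segment by invoking one of the well-behaving axioms; chaining the two inequalities then yields $f(P)\le f(\text{target})$. The only facts a genuine specialization contributes beyond its sign of dependence are $m(XQA=a)\le m(XA=a)$ and the slanted boundary of Figure \ref{depproof4}, namely $m(XQ)-m(XQA=a)\le m(XA\neq a)$, where I abbreviate $m(XA\neq a)=m(X)-m(XA=a)$.

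For part (a) the target is $(m(XA=a),m(XA=a))$, the apex of the positive region, lying both on the confidence-one line $N_{XA}=N_X$ and on the top edge $N_{XA}=m(XA=a)$. Since $P$ is a positive dependency, its confidence $cf_1=m(XQA=a)/m(XQ)$ exceeds $m(A=a)/n$. First I would move along the ray $N_{XA}=cf_1N_X$ (constant confidence), increasing $N_X$, until it meets the top edge at $(m(XA=a)/cf_1,\,m(XA=a))$; by axiom (iv)(a) $f$ increases along this segment, and the move is legitimate because $m(XQ)\le m(XA=a)/cf_1$ is exactly $m(XQA=a)\le m(XA=a)$. Then I would move left along the top edge $N_{XA}=m(XA=a)$ to the target; by axiom (iii)(a), $f$ is decreasing in $N_X$ on the positive side (and $m(XA=a)/cf_1<n\,m(XA=a)/m(A=a)$ keeps us on that side), so shrinking $N_X$ raises $f$, the move being legitimate since $cf_1\le1$ forces $m(XA=a)/cf_1\ge m(XA=a)$. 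Hence $f(P)\le f(m(XA=a),m(XA=a),m(A=a),n)$.

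For part (b) the target is $(m(XA\neq a),0)$, the corner where the bottom edge $N_{XA}=0$ meets the slanted boundary of the region. Now $P$ is a negative dependency, so $cf_2=P(A=a\mid\neg XQ)=\frac{m(A=a)-m(XQA=a)}{n-m(XQ)}>m(A=a)/n$. First I would move along the constant-$cf_2$ line $N_{XA}=m(A=a)-cf_2(n-N_X)$ in the direction of decreasing $N_X$ until it meets the bottom edge at $(N_X^{\ast},0)$ with $N_X^{\ast}=n-m(A=a)/cf_2$; by axiom (iv)(b), $f$ increases as $N_X$ decreases, so this segment raises $f$. Then I would move right along $N_{XA}=0$ to the target; on this edge axiom (iii)(b), applied with the fixed value $N_{XA}=0$ (so the independence abscissa is $0$ and every $N_X>0$ is on the negative side), shows $f$ is increasing in $N_X$, so the rightward move raises $f$, giving $f(P)\le f(m(XA\neq a),0,m(A=a),n)$.

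The step that actually requires work, and the only place the specialization constraints enter, is verifying that the second segment in (b) really points to the right, i.e.\ $N_X^{\ast}\le m(XA\neq a)$; otherwise the edge monotonicity would push $f$ the wrong way. Since $N_X$ on the bottom edge is an increasing function of the associated $cf_2$, this is equivalent to $cf_2\le \frac{m(A=a)}{n-m(XA\neq a)}$, the $cf_2$-value at the target. I expect to reduce this, after clearing the positive denominators, to $m(A=a)\bigl(m(XQ)-m(XA\neq a)\bigr)\le m(XQA=a)\bigl(n-m(XA\neq a)\bigr)$, and then to close it using the region constraint $m(XQ)-m(XQA=a)\le m(XA\neq a)$ together with the elementary fact $m(XA\neq a)\le m(A\neq a)=n-m(A=a)$; the regimes $m(XQ)\le m(XA\neq a)$ and $m(XQ)>m(XA\neq a)$ are treated separately, the first being immediate. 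The positive case has the analogous requirement $m(XA=a)/cf_1\ge m(XA=a)$, but there it collapses to $cf_1\le1$ and needs no argument, so part (b) is where the real care is needed. Finally, degenerate points $P$ on the independence line (where $\delta=0$) are covered by the non-strict forms of the axioms, since there $f$ attains its minimum and the bounds hold trivially.
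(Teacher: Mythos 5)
Your proposal is correct, and it works in the same geometric framework as the paper (chaining monotonicity along straight segments of the $(N_X,N_{XA})$-plane toward the corner points $(m(XA{=}a),m(XA{=}a))$ and $(m(XA{\neq}a),0)$), but the decomposition is genuinely different and in one respect tighter. The paper first pushes an arbitrary point of the admissible region to the region's boundary using axiom (ii) (maximal $|\delta|$ for fixed $N_X$) and then analyzes the boundary: in (a) the diagonal (axiom (iv)(a) with $cf_1=1$) and the top edge (axiom (iii)(a)); in (b) each point $p_1$ of the slanted boundary is carried along its constant-$cf_2$ line to a point $p_2$ on the $N_X$-axis (axiom (iv)(b)) and then along the axis to the corner (axiom (iii)(b)). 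You instead run a two-segment path directly from the point $P$ itself, a constant-confidence segment followed by an edge segment, so axiom (ii) is never invoked; in effect your argument shows that axioms (iii) and (iv) alone already yield the theorem. More importantly, the paper's assertion that the constant-$cf_2$ line ``intersects the $N_X$-axis in some point $p_2$ in the interval $[(0,0),(m(XA\neq a),0)]$'' is justified only by inspection of Figure \ref{depproof5}; you correctly isolate the analogous containment $N_X^{\ast}=n-m(A{=}a)/cf_2\le m(XA{\neq}a)$ as the crux, reduce it to $m(A{=}a)\bigl(m(XQ)-m(XA{\neq}a)\bigr)\le m(XQA{=}a)\bigl(n-m(XA{\neq}a)\bigr)$, and this does close exactly as you sketch: the case $m(XQ)\le m(XA{\neq}a)$ is trivial, and otherwise $m(A{=}a)\bigl(m(XQ)-m(XA{\neq}a)\bigr)\le m(A{=}a)\,m(XQA{=}a)\le \bigl(n-m(XA{\neq}a)\bigr)\,m(XQA{=}a)$ by the region constraint and $m(XA{\neq}a)\le n-m(A{=}a)$. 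So your route buys an explicit algebraic verification precisely where the paper relies on a picture, at the cost of the paper's cleaner ``reduce to the boundary'' narrative that parallels the proof of Theorem \ref{simple}.
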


\begin{proof}
\begin{itemize}

\item[(a)] (Positive dependence) Figure \ref{depproof3} shows the area, 
where possible points $(m(XQ),m(XQA=a))$ for positive dependence can lie. 
With any $N_X\leq m(X)$, the maximum is achieved on the border defined by 
points $(0,0)$, $(m(XA=a),m(XA=a))$, and $(m(X),m(XA=a))$ ($\delta$ is 
maximal). On line $[(0,0),(m(XA=a),m(XA=a))]$ $f$ is increasing and on line 
$[(m(XA=a),m(XA=a)),(m(X),m(XA=a))]$ it is decreasing. Therefore, the 
global maximum is achieved in point $(m(XA=a),m(XA=a))$.

\item[(b)] (Negative dependence) Figure \ref{depproof4} shows the
  area, where possible points\\$(m(XQ),m(XQA=a))$ for negative
  dependence can lie. Once again, $f$ gets its maximal value for any
  $N_X$, when $-\delta$ is maximal. Therefore, the maximum must lie on
  the border defined by points $(0,0)$, $(m(XA\neq a),0)$, and the 
intersection point $p_{int}$. On line $[(0,0),(m(XA\neq a),0))]$, $f$ is 
increasing, and the maximal value is achieved in point $(m(XA\neq a),0)$. 
Therefore, it suffices to show that $f$ gets its maximum on line 
$[p_{int},(m(XA\neq a),0)]$ in the same point, $(m(XA\neq a),0)$. 

Figure \ref{depproof5} shows the proof idea. For every point $p_1$ on
line $[p_{int},(m(XA\neq a),0)]$, we can define a line of form
$m(A=a)-N_{XA}=cf_2(n-N_X)$, which goes through $p_1$. Because $p_1$
is under the independence line, $cf_2>P(A=a)$, and line
$m(A=a)-N_{XA}=cf_2(n-N_X)$ intersects $N_X$-axis in some point $p_2$
in the interval $[(0,0),(m(XA\neq a),0)]$. According to the definition
of well-behaving measures, $f$ is decreasing on line
$m(A=a)-N_{XA}=cf_2(n-N_X)$, and therefore $f$ gets a better value in
point $p_2$ than in point $p_1$. On the other hand, we already know
that $f$ gets a better value in $(m(XA\neq a),0)$ than any point
$p_2$. Therefore, $f$ must get its upper bound in point $(m(XA\neq
a),0)$.
\end{itemize}
\end{proof}

\begin{figure}[!h]
\begin{center}
\includegraphics[width=\textwidth]{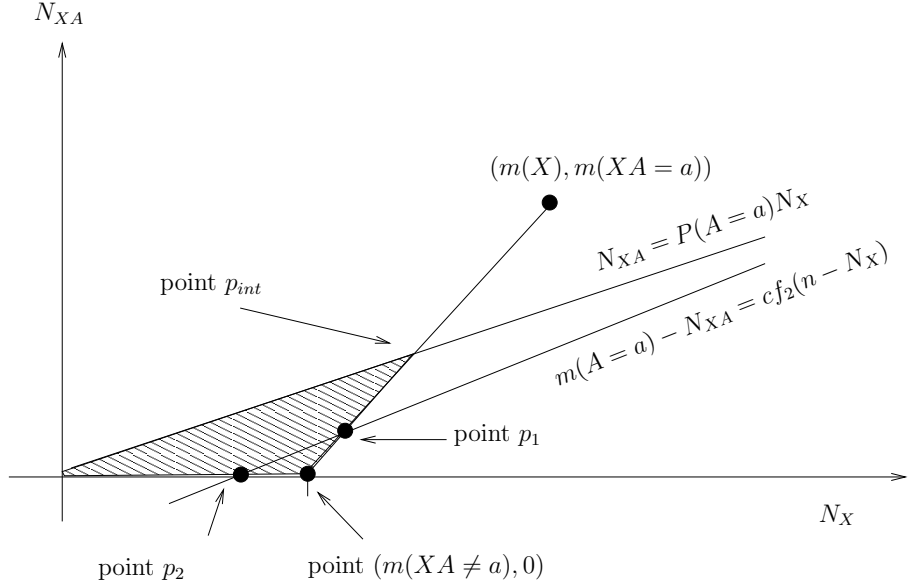}
\caption{Proof idea. Function $f$ is better in $p_2$ than in $p_1$ and 
better in point $(m(XA\neq a),0)$ than in $p_2$.}
\label{depproof5}
\end{center}
\end{figure}

These upper bounds enable more effective pruning than Theorem
\ref{simple}, because now pruning is possible even if $m(X)>m(A=a)$ or
$m(X)>m(A\neq a)$. The upper bounds are also tight in the sense that there
can be rules $XQ \rightarrow A$, which reach the upper bound values.

\section{Conclusions}

We have formalized the classical axioms for proper goodness measures
and extended them to to cover both positive and negative dependency
rules. We have shown that all such well-behaving goodness measures
achieve their upper bounds in the same points of the search
space. This is an important results, because it means that the same
generic search algorithm can be applied for a large variety of
commonly used goodness measures.

\appendix

\section{Example proofs for the good behaviour of common goodness measures}
\label{appA}

In the following, we show that the $\chi^2$-measure, mutual
information, two versions of the $z$-score
(e.g.\ \cite{statapr,lallich,bruzzese}), and the $J$-measure
\cite{smythgoodman} are well-behaving measures.  The first two
measures are defined for both positive and negative dependencies, while
the last three are defined only for positive dependencies. 

\begin{theorem}
Let $S\subseteq N^4$ be defined by constraints $0<N$, $0<N_X<N$, 
$0<N_A<N$, and $0\leq N_{XA}\leq \min\{N_X,N_A\}$. Measure $M$ is 
well-behaving, if it is defined by function
\begin{itemize}
\item[(a)] $\chi^2(N_X,N_{XA},N_A,N)=\frac{N(NN_{XA}-N_XN_A)^2}{N_X(N-N_X)N_A(N-N_A)}$,

\item[(b)] 
$MI(N_X,N_{XA},N_A,N)=N_{XA}\log \frac{N\cdot N_{XA}}{N_XN_A}+(N_X-N_{XA})\log \frac{N\cdot (N_X-N_{XA})}{N_X(N-N_A)}$\\
$+(N_A-N_{XA})\log \frac{N\cdot (N_A-N_{XA})}{(N-N_X)N_A}+(N-N_X-N_A+N_{XA})\log \frac{N\cdot (N-N_X-N_A+N_{XA})}{(N-N_X)(N-N_A)}$,

\item[(c)]$z_1(N_X,N_{XA},N_A,N)=\frac{\sqrt{N}(NN_{XA}-N_XN_A)}{\sqrt{N_XN_A(N^2-N_XN_A)}},$ when $NN_{XA}>N_XN_A$, and 0, otherwise, 

\item[(d)] $z_2(N_X,N_{XA},N_A,N)=\frac{NN_{XA}-N_XN_A}{\sqrt{N_XN_A(N-N_A)}},$ 
when $NN_{XA}>N_XN_A$, and 0, otherwise, and 

\item[(e)] $J(N_X,N_{XA},N_A,N)=N_{XA}\log(\frac{N_{XA}}{N_A})+(N_X-N_{XA})\log(\frac{N_X-N_{XA}}{N-N_A})-N_X\log(\frac{N_X}{N})$, when $NN_{XA}>N_XN_A$, and 
0, otherwise.
\end{itemize}
\end{theorem}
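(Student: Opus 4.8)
The plan is to verify the four defining properties of a well-behaving measure for each of the five measures $\chi^2$, $MI$, $z_1$, $z_2$, and $J$. Since properties (i) and (ii) concern behaviour as a function of leverage $\delta$ while (iii) and (iv) concern behaviour along the $N_X$-axis and along fixed-confidence lines, the natural approach is, for each measure, to rewrite it in terms of $\delta$ (using the transformation in the Well-behaving definition, $N_{XA}=\frac{N_XN_A}{N}+\delta N$) for properties (i)--(ii), and to compute partial derivatives with respect to $N_X$ along the relevant curves for (iii)--(iv). Throughout I would treat $N_A$ and $N$ as fixed constants, as the statement permits.

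I would handle the measures in increasing order of difficulty, beginning with $\chi^2$, since it is the cleanest. Writing $\chi^2 = \frac{N^3\delta^2}{N_X(N-N_X)N_A(N-N_A)}$ makes property (i) immediate ($\chi^2=0 \Leftrightarrow \delta=0$) and property (ii) follows because $\chi^2$ is proportional to $\delta^2$ with a positive coefficient, so it increases in $|\delta|$; the sign analysis for $\delta>0$ versus $\delta<0$ is then trivial. For (iii) I would differentiate $\frac{N(NN_{XA}-N_XN_A)^2}{N_X(N-N_X)N_AN(N-N_A)}$ with respect to $N_X$ holding $N_{XA}$ fixed and check that the derivative vanishes exactly at the independence point $N_X=\frac{N N_{XA}}{N_A}$ with the correct sign on either side. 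For (iv) I would substitute $N_{XA}=cf_1 N_X$ (positive case) and $N_{XA}=N_A - cf_2(N-N_X)$ (negative case) and verify the monotonicity claims under the hypotheses $cf_1>\frac{N_A}{N}$ and $cf_2>\frac{N_A}{N}$. Because $z_1$ and $z_2$ are essentially $\chi^2$ restricted to the positive-dependence region with a square root and a modified denominator, I expect their verification to parallel the $\chi^2$ computation closely, with the square root preserving monotonicity and the definition as $0$ off the positive region trivially satisfying the minimality property (i).

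The genuinely laborious cases are $MI$ and $J$, because they are sums of terms of the form $x\log(x/y)$ rather than rational functions, so the derivatives do not collapse to a single clean factor. Here I would write $MI$ as a sum over the four cells of the $2\times 2$ contingency table and compute $\frac{\partial MI}{\partial \delta}$ term by term; the expected payoff is that the derivative simplifies to a logarithm of a ratio of cell probabilities that is clearly zero at independence and has the right sign away from it, so that (i) and (ii) fall out once the log-ratio is identified. For property (iii) and (iv) of $MI$ and for all properties of $J$ (which is defined only for positive dependence, so only the (a)-halves of (ii)--(iv) are in play), the same cell-by-cell differentiation is needed, and the main obstacle will be controlling the signs of these sums of logarithmic terms over the whole legal region $S$ rather than at isolated points.

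I expect the hardest single step to be establishing property (iv) for $MI$ and $J$, namely that the measure increases along a fixed-confidence ray $N_{XA}=cf_1 N_X$ when $cf_1>\frac{N_A}{N}$. Unlike properties (i)--(iii), which localize the sign question near the independence line, property (iv) requires a monotonicity statement along a line that can traverse a long range of $N_X$ values, so the derivative must be shown to keep a fixed sign globally; for the logarithmic measures this amounts to proving an inequality among several log terms that does not reduce to a perfect square. I would attack this by substituting the confidence constraint to express all four cell frequencies as affine functions of $N_X$, differentiating, and then bounding the resulting expression using the elementary inequalities $\log t \le t-1$ and $\log t \ge 1 - 1/t$ to pin down the sign; I anticipate that the structure of these inequalities is exactly what forces the condition $cf_1>\frac{N_A}{N}$ to appear.
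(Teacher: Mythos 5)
Your proposal is correct, and for $\chi^2$, $z_1$, $z_2$ and for conditions (i)--(iii) of $MI$ and $J$ it coincides with the paper's own method: express the measure in terms of $\delta$ and differentiate for (i)--(ii), and differentiate along fixed-$N_{XA}$ or fixed-confidence lines for (iii)--(iv), the derivative in each case collapsing to a single signed factor or to the logarithm of an odds-type ratio. The genuine difference is in the step you correctly single out as hardest, condition (iv) for $MI$ and $J$. The paper proves it by a \emph{second} differentiation: it sets $g(cf_1)=f'$ and differentiates with respect to $cf_1$, showing (for $MI$) that $g'=n\left[\ln(odds)+\delta/(P(\neg XA)P(X\neg A))\right]>0$ in the positive-dependence range, so that $f'$ attains its minimum at $cf_1=P(A)$, where it equals $0$; hence $f'>0$ for $cf_1>P(A)$ (and symmetrically for the $cf_2$ case). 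Your route --- substitute the confidence constraint, differentiate once, and bound with $\ln t\geq 1-1/t$ --- also works and is in fact shorter: after substitution the derivative for $J$ is exactly $n\left[cf_1\ln\frac{cf_1}{P(A)}+(1-cf_1)\ln\frac{1-cf_1}{1-P(A)}\right]$, and for $MI$ it is the analogous expression with $P(A)$ replaced by $P(A|\neg X)$ (with a sign flip in the $cf_2$ case), i.e.\ $n$ times a Kullback--Leibler divergence of two Bernoulli distributions; your elementary inequality is precisely Gibbs' inequality and gives nonnegativity with equality only at independence, with no auxiliary optimization over the confidence parameter. What the paper's version buys is a purely mechanical calculus argument that never needs to recognize this structure; what yours buys is a one-shot pointwise bound that also explains the paper's side remark that condition (iv) in fact holds everywhere off the independence line. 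Two small corrections: the hypothesis $cf_1>P(A)$ is not ``forced'' by the log inequalities as you anticipated --- strict positivity holds whenever $P(A|X)\neq P(A|\neg X)$, and $cf_1>P(A)$ merely guarantees this non-independence; and with $\delta$ the relative leverage $P(XA)-P(X)P(A)$, the prefactor in your $\chi^2$ rewrite should be $N^5$, not $N^3$ (harmless, since only the sign analysis matters).
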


\begin{proof}
In the proofs, we assume that $N=n$ is fixed. We will simplify the functions 
by substituting $P(XA)=\frac{N_{XA}}{N}$, $P(X)=\frac{N_X}{N}$, 
$P(A)=\frac{N_A}{N}$. 
\begin{itemize}
\item[(a)] Conditions (i) and (ii): For $\chi^2$ the alternative expression is 
$$f_2(P(X),\delta,P(A),n)=\frac{n\delta^2}{P(X)(1-P(X))P(A)(1-P(A))}.$$ 
The derivative with respect to $\delta$ is  
$$f_2'=\frac{2n\delta}{P(X)(1-P(X))P(A)(1-P(A))},$$ which satisfies 
the conditions (i) and (ii). 

Condition (iii): When $P(XA)$, $P(A)$, and $n$ are fixed, $f$ can be 
expressed as 
$$f(P(X),P(XA),P(A),n)=\frac{n}{P(A)(1-P(A))}g(P(X)),$$ where
$g(P(X))=\frac{(P(XA)-P(X)P(A))^2}{P(X)(1-P(X))}$. The first factor is
constant, and therefore it is sufficient to differentiate $g(P(X))$ with
respect to
$P(X)$. 
$$g'(P(X))=\frac{-2P(XA)P(X)^2P(A)+P(X)^2P(A)^2-P(XA)^2+2P(X)P(XA)^2}{P(X)^2(1-P(X))^2}.$$ 
The denominator is
$[P(XA)-P(X)P(A)][-P(X)P(A)-P(XA)+2P(X)P(XA)]=[P(XA)-P(X)P(A)][P(X)(P(XA)-P(A))+P(XA)(P(X)-1)]$. The
first factor is leverage $\delta$ and the second factor is always
negative. Therefore, $g'<0$, when $\delta>0$, and $g'>0$, when
$\delta<0$.

Condition (iv): Let as first check the case, where
$N_{XA}=cf_1N_X$. Now $f$ becomes
$$f(P(X),cf_1P(X),P(A),n)=\frac{nP(X)(cf_1-P(A))^2}{P(A)(1-P(A))(1-P(X))}.$$ 
This is clearly an
increasing function of $P(X)$, when $cf_1\neq P(A)$. Let us then
check case $N_{XA}=m(A)-cf_2(n-N_X)$. Now $f$ becomes
$$f(P(X),P(A)-cf_2(1-P(X)),P(A),n)=\frac{n(1-P(X))(P(A)-cf_2)^2}{P(X)P(A)(1-P(A))}.$$ This is clearly a
decreasing function of $P(X)$, when $cf_2\neq P(A)$.

\item[(b)] In mutual information, the base of the logarithm is not
  defined, but usually it is assumed to be 2. However, transformation
  to the natural logarithm causes only an extra term +1, which
  disappears in differentiation. Therefore we will use the natural logarithms
  for simplicity. We recall that the derivative of a term of form 
  $g(x)\ln(g(X))$ is $g'(x)\ln(g(x))+g'(x)$.

Condition (i) and (ii): 
$MI$ can be expressed as function $f_2$:
{\small
\begin{multline*}
f_2(P(X),\delta,P(A),n)=n[(P(X)P(A)+\delta)\ln(P(X)P(A)+\delta)+
(P(X)(1-P(A))-\delta)\ln (P(X)\\(1-P(A))-\delta)+ ((1-P(X))P(A)-\delta)\ln
  ((1-P(X))P(A)-\delta) + ((1-P(X))(1-P(A))+\delta)\\\ln
  ((1-P(X))(1-P(A))+\delta)-P(A)\ln(P(A))
  -(1-P(A))\ln(1-P(A))-P(X)\ln(P(X))\\-(1-P(X))\ln(1-P(X))].
\end{multline*}}
The derivative of $f_2$ with respect to $\delta$ is
$$f_2'=n \ln \left(\frac{(P(X)P(A)+\delta)((1-P(X))(1-P(A))+\delta)}{((P(X)(1-P(A))-\delta)((1-P(X))P(A)-\delta)}\right).$$

This is the same as $n$ times the logarithm of the odds ratio $odds$, for
which holds $odds=1$, when $\delta=0$, $odds>1$, when $\delta>0$, and
$odds<1$, when $\delta<0$. Therefore, the logarithm is zero, when
$\delta=0$, negative, when $\delta<0$, and positive, when
$\delta>0$. 

Condition (iii): When $P(XA)$, $P(A)$, and $n$ are fixed, $f$ can be 
expressed as 
{\small
\begin{multline*}
g(P(X))=n[P(XA)\ln(P(XA))+(P(X)-P(XA))\ln(P(X)-P(XA)) +
  (P(A)-P(XA))\\ \ln (P(A)-P(XA))+(1-P(X)-P(A)+P(XA))\ln
  (1-P(X)-P(A)+P(XA))-P(A)\\ \ln(P(A))
  -(1-P(A))\ln(1-P(A))-P(X)\ln(P(X))-(1-P(X))\ln(1-P(X))].
\end{multline*}}
The derivative of $g$ with respect to $P(X)$ is
$$g'=n\ln \left(\frac{(P(X)-P(XA))(1-P(X))}{(1-P(X)-P(A)+P(XA))P(X)}\right).$$

Since $q=\frac{(P(X)-P(XA))(1-P(X))}{(1-P(X)-P(A)+P(XA))P(X)}=1$, when
$P(XA)=P(X)P(A)$, $q>1$, when $P(XA)<P(X)P(A)$, and $q<1$, when
$P(XA)>P(X)P(A)$, the logarithm is zero, when $X$ and $A=a$ are
independent, negative, when $X$ and $A=a$ are positively dependent,
and positive, when $X$ and $A=a$ are negatively dependent.

Condition (iv): When $N_{XA}=cf_1N_X$, $f$ becomes
{\small
\begin{multline*}
f(P(X),cf_1P(X),P(A),n)=n[P(X)cf_1\ln(P(X)cf_1)+P(X)(1-cf_1)\ln(P(X)(1-cf_1))\\+(P(A)-P(X)cf_1)\ln(P(A)-P(X)cf_1)+(1-P(X)-P(A)+
  P(X)cf_1)\ln(1-P(X)-P(A)\\+P(X)cf_1)-P(X)\ln
  P(X)-(1-P(X))\ln(1-P(X))-P(A)\ln P(A)-(1-P(A))\ln(1-P(A))].
\end{multline*}}
The derivative of $f$ is 
\begin{multline*}
f'=n[cf_1\ln(P(X)cf_1)+(1-cf_1)\ln(P(X)(1-cf_1))-cf_1\ln(P(A)-P(X)cf_1)-(1-cf_1)\\ \ln(1-P(X)-P(A)+P(X)cf_1)-\ln(P(X))+\ln(1-P(X))].
\end{multline*}
We should show that $f'>0$, when $cf_1>P(A)$. To find the lowest value of $f'$, 
we set $g(cf_1)=f'(P(X))$ and differentiate $g$ with respect to $cf_1$. 

\begin{align*}
g'(cf_1)&=n\left[\ln \left( \frac{P(X)cf_1(1-P(X)-P(A)+P(X)cf_1)}{P(X)(1-cf_1)(P(A)-P(X)cf_1)}\right)+\frac{P(X)cf_1}{P(A)-P(X)cf_1}\right.\\
&\left.-\frac{P(X)(1-cf_1)}{1-P(X)-P(A)+P(X)cf_1}\right]\\
&=n\left[\ln \left(\frac{P(XA)P(\neg X\neg A)}{P(X\neg A)P(\neg XA)}\right)+
\frac{P(XA)P(\neg X\neg A)-P(\neg XA)P(X\neg A)}{P(\neg XA)
P(X\neg A)}\right]\\
&=n\left[\ln(odds)+\frac{\delta}{P(\neg XA)P(X\neg A)}\right].
\end{align*}

The first term is the logarithm of the odds ratio, which is $>0$, when
$\delta>0$. So $g'>0$, when $\delta>0$. Therefore, $g$ is an
increasing function of $cf_1$ and gets its minimal value, when
$\delta=0$ and $cf_1=P(A)$. When we substitute this to $f'$, we get
$f'=\ln(1)=0$. This is the minimal value of $f'$, which is achieved
only, when $\delta=0$; otherwise $f'>0$, as desired.

Let us then check case $N_{XA}=m(A)-cf_2(n-N_X)$. Now
$P(XA)=P(A)-cf_2(1-P(X))$, $P(X\neg A)=P(X)(1-cf_2)-P(A)+cf_2$,
$P(\neg XA)=cf_2(1-P(X))$, $P(\neg X\neg A)=(1-P(X))(1-cf_2)$, and $f$
becomes
{\small
\begin{multline*}
f(P(X),P(A)-cf_2(1-P(X)),P(A),n)=\\
n[(P(A)-cf_2(1-P(X)))\ln (P(A)-cf_2(1-P(X)))+(P(X)(1-cf_2)-P(A)+cf_2)\ln (P(X)\\(1-cf_2)-P(A)+cf_2) +cf_2(1-P(X))\ln(cf_2(1-P(X)))+(1-P(X))(1-cf_2)\ln((1-P(X))\\(1-cf_2))-P(X)\ln P(X)-(1-P(X))\ln(1-P(X))-P(A)\ln(P(A))-(1-P(A))\ln(1-P(A))]\\
=n[(P(A)-cf_2(1-P(X)))\ln (P(A)-cf_2(1-P(X)))+(P(X)(1-cf_2)-P(A)+cf_2)\ln (P(X)\\(1-cf_2)-P(A)+cf_2) +cf_2(1-P(X))\ln(cf_2)+(1-P(X))(1-cf_2)\ln(1-cf_2)-P(X)\ln P(X)\\
-P(A)\ln(P(A))-(1-P(A))\ln(1-P(A))]
\end{multline*}}
The derivative of $f$ with respect to $P(X)$ is
\begin{multline*}
f'=n[cf_2\ln\frac{P(A)-cf_2(1-P(X))}{cf_2}+(1-cf_2)\ln \frac{P(X)(1-cf_2)-P(A)+cf_2}{1-cf_2}\\-\ln(P(X))].
\end{multline*}
We should show that $f'<0$, when $cf_2>P(A)$. To find the largest value of 
$f'$, we set $g(cf_2)=f'(P(X))$ and differentiate $g$ with respect to $cf_2$. 

\begin{align*}
g'(cf_2)&=n\left[\ln \left(\frac{P(A)-cf_2(1-P(X))}{cf_2}\right)-\ln \left(\frac{P(X)-P(A)+cf_2(1-P(X))}{1-cf_2}\right)\right.\\
&\left. -\frac{P(A)}{P(A)-cf_2(1-P(X))}+\frac{1-P(A)}{P(X)-P(A)+cf_2(1-P(X))}\right]\\
&=n\left[\ln \left(\frac{P(XA)(1-cf_2)}{cf_2P(X\neg A)}\right)-\frac{P(A)}{P(XA)}+\frac{1-P(A)}{P(X\neg A)}\right].
\end{align*}

Because the dependence is negative, $-P(XA)>-P(X)P(A)$ and
$P(X)(1-P(A))<P(X\neg A)$. Therefore, the sum of the last two terms is
$<0$. The first term becomes $\ln(odds)$, when we substitute
$cf_2=P(A|\neg X)$. For negative dependence, also $\ln(odds)<0$, and
therefore $g'<0$. Because $g$ is decreasing with $cf_2$, $f'$ gets its
maximum value, when $cf_2$ is minimal, i.e.\ $P(A)$. When we
substitute this to $f'$, it becomes $f'=0$. This is the maximal value
of $f'$, which is achieved only when $cf_2=P(A)$ and
$\delta=0$. Otherwise, $f'<0$, as desired.

\item[(c)] Now it is enough to check the conditions only for the
  positive dependence. 

Conditions (i) and (ii): $z_1$ can be expressed as
$$f_2(P(X),\delta,P(A),n)=\frac{\sqrt{n}\delta}{\sqrt{P(X)P(A)(1-P(X)P(A))}},$$
when $\delta>0$, and $f_2=0$, otherwise. This is clearly an
increasing function of $\delta$ and gets its minimum value $0$, when
$\delta\leq 0$.

Condition (iii): When $P(XA)$, $P(A)$, and $n$ are fixed, $f$ can be 
expressed as 
$$f(P(X),P(XA),P(A),n)=\frac{\sqrt{n}}{\sqrt{P(A)}}g(P(X)),$$ where
$g(P(X))=\frac{P(XA)-P(X)P(A)}{\sqrt{P(X)(1-P(X)P(A))}}$. The derivative
of $g$ with respect to $P(X)$ is

\begin{align*}
g'&=\frac{-2P(X)P(A)(1-P(X)P(A))-(P(XA)-P(X)P(A))(1-2P(X)P(A))}{2(P(X)(1-P(X)P(A)))^{\frac{3}{2}}}\\
&=\frac{-P(X)P(A)(1-P(XA))-P(XA)(1-P(X)P(A))}{2(P(X)(1-P(X)P(A)))^{\frac{3}{2}}}.
\end{align*}

Because $P(X)P(A)<1$ and $P(XA)\leq 1$, $g'<0$ always.

Condition (iv): When $N_{XA}=cf_1N_X$, $f$
becomes $$f(P(X),cf_1P(X),P(A),n)=\frac{\sqrt{nP(X)}(cf_1-P(A))}{\sqrt{P(A)(1-P(X)P(A)})},$$
which is clearly an increasing function of $P(X)$, when $cf_1\geq
P(A)$.

\item[(d)] Conditions (i) and (ii): $z_2$ can be expressed as 
$$f_2(P(X),\delta,P(A),n)=\frac{\sqrt{n}\delta}{\sqrt{P(X)P(A)(1-P(A))}},$$
when $\delta>0$, and $f_2=0$, otherwise. This is clearly an increasing 
function of $\delta$ and gets its minimum value 
$0$, when $\delta\leq 0$.

Condition (iii): When $P(XA)$, $P(A)$, and $n$ are fixed, $f$ can be 
expressed as 
$$f(P(X),P(XA),P(A),n)=\frac{\sqrt{n}}{\sqrt{P(A)(1-P(A))}}g(P(X)),$$ where
$g(P(X))=\frac{P(XA)-P(X)P(A)}{\sqrt{P(X)}}$. The derivative
of $g$ with respect to $P(X)$ is
$$g'(P(X))=\frac{-P(X)P(A)-P(XA)}{2P(X)^{\frac{3}{2}}},$$ which is $<0$ always. 

Condition (iv): When $P(XA)=cf_1P(X)$, $f$ becomes
$$f(P(X),cf_1P(X),P(A),n)=\frac{\sqrt{nP(X)}(cf_1-P(A))}{\sqrt{P(A)(1-P(A))}},$$ which is clearly an increasing function of $P(X)$, when $cf_1\geq P(A)$.

\item[(e)] Like in the mutual information, we will use the natural 
logarithm for simplicity. 

Conditions (i) and (ii): $J$ can be expressed as 
\begin{multline*}
f_2(P(X),\delta,P(A),n)=n\left[(P(X)P(A)+\delta)\ln\left(\frac{P(X)P(A)+\delta}{P(A)}\right)\right.\\
\left.+(P(X)(1-P(A))-\delta)\ln\left(\frac{P(X)(1-P(A))-\delta}{(1-P(A))}\right)-P(X)\ln(P(X))\right].
\end{multline*}

The derivative of $f_2$ with respect to $\delta$ is
\begin{align*}
f_2'&=n\left[\ln \left(\frac{P(X)P(A)+\delta}{P(A)}\right)-\ln \left(\frac{P(X)(1-P(A))-\delta}{(1-P(A))}\right)-\ln(P(X))\right]\\
&=n\ln \left(\frac{(P(X)P(A)+\delta)(1-P(A))}{(P(X)(1-P(A))-\delta)P(X)P(A)}\right ).
\end{align*}

The argument of the logarithm is 1, if $\delta=0$, and otherwise it is
$>1$. Therefore, $f_2'\geq 0$, when $\delta\geq 0$.

Condition (iii): When $P(XA)$, $P(A)$, and $n$ are fixed, $f$ can be 
expressed as 
\begin{multline*}
g(P(X))=n[P(XA)\ln \left(\frac{P(XA)}{P(A)}\right)+(P(X)-P(XA))\ln \left(\frac{P(X)-P(XA)}{(1-P(A))}\right)\\-P(X)\ln(P(X))].
\end{multline*}

The derivative of $g$ with respect to $P(X)$
is 
$$g'=n[\ln\left(\frac{P(X)-P(XA)}{(1-P(A))}\right)-\ln(P(X))]=n\ln \left(\frac{P(X)-P(XA)}{P(X)(1-P(A))}\right).$$ 
The argument of the logarithm is $<1$ and thus $g'<0$, if there is a
positive dependency.

Condition (iv): When $N_{XA}=cf_1N_X$, $f$ becomes
\begin{multline*}
f(P(X),cf_1P(X),P(A),n)=\\n\left[P(X)cf_1\ln \left(\frac{P(X)cf_1}{P(A)}\right)+P(X)(1-cf_1)\ln \left(\frac{P(X)(1-cf_1)}{(1-P(A))}\right)-P(X)\ln(P(X))\right]. 
\end{multline*}

The derivative of $f$ with respect to $P(X)$ is
\begin{align*}
f'&=n\left[cf_1\ln\left(\frac{P(X)cf_1}{P(A)}\right)+(1-cf_1)\ln\left(\frac{P(X)(1-cf_1)}{(1-P(A))}\right)-\ln(P(X))\right]\\
&=n\left[cf_1\ln\left(\frac{cf_1(1-P(A))}{(1-cf_1)P(A)}\right)+\ln\left(\frac{1-cf_1}{(1-P(A))}\right)\right].
\end{align*}

We should show that $f'>0$, when $cf_1>P(A)$. To find the lowest value of $f'$, 
we set $g(cf_1)=f'(P(X))$ and derivative $g$ with respect to $cf_1$. 
$$g'(cf_1)=n\ln \left(\frac{cf_1(1-P(A))}{(1-cf_1)P(A)}\right).$$
Clearly $g'=0$, if $cf_1=P(A)$, and $g'>0$, if $cf_1>P(A)$. When we
substitute the minimum value $cf_1=P(A)$ to $f'$, we get
$f'=n[P(A)\ln(1)+\ln(1)]=0$. When $cf_1>P(A)$, $f'>0$, and $f$ is an
increasing function of $P(X)$.

\end{itemize}
\end{proof}

\bibliographystyle{plain}


\end{document}